\newcommand{\bra}[1]{\langle#1|} \newcommand{\ket}[1]{|#1\rangle}
\newcommand{\braket}[2]{\langle#1|#2\rangle}
\newcommand{\ketbra}[2]{|#2\rangle\langle#1|}
\newcommand{\Tr}{\operatorname{Tr}}
\newcommand{\hilb}[1]{\mathcal{#1}} \newcommand{\set}[1]{\mathcal{#1}}
\newcommand{\lin}{\operatorname{Lin}}
\newcommand{\conv}{\operatorname{Conv}}
\newcommand{\rank}{\operatorname{rank}} \newcommand{\identity}{I}
\newtheorem{dfn}{Definition} \newtheorem{lmm}{Lemma}
\newtheorem{thm}{Proposition} 
\newtheorem{algo}{Algorithm} \newtheorem{rmk}{Remark}
\begin{document}

\title{Robustness of Device Independent Dimension Witnesses}

\author{Michele \surname{Dall'Arno}}

\affiliation{ICFO-Institut de Ciencies Fotoniques, Mediterranean
  Technology Park, E-08860 Castelldefels (Barcelona), Spain}

\affiliation{Graduate School of Information Science, Nagoya
  University, Chikusa-ku, Nagoya, 464-8601, Japan}

\author{Elsa \surname{Passaro}}

\author{Rodrigo \surname{Gallego}}

\affiliation{ICFO-Institut de Ciencies Fotoniques, Mediterranean
  Technology Park, E-08860 Castelldefels (Barcelona), Spain}

\author{Antonio \surname{Ac\'in}}

\affiliation{ICFO-Institut de Ciencies Fotoniques, Mediterranean
  Technology Park, E-08860 Castelldefels (Barcelona), Spain}

\affiliation{ICREA-Institucio Catalana de Recerca i Estudis
  Avan\c{c}ats, Lluis Companys 23, E-08010 Barcelona, Spain}

\date{\today}

\begin{abstract}
  Device independent dimension witnesses provide a lower bound on the
  dimensionality of classical and quantum systems in a ``black box''
  scenario where only correlations between preparations, measurements
  and outcomes are considered. We address the problem of the
  robustness of dimension witnesses, namely that to witness the
  dimension of a system or to discriminate between its quantum or
  classical nature, even in the presence of loss. We consider the case
  when shared randomness is allowed between preparations and
  measurements and provide a threshold in the detection efficiency
  such that dimension witnessing can still be performed.
\end{abstract}

\maketitle

\section{Introduction}

For several experimental setups, a description that completely
specifies the nature of each device is unsatisfactory. For example, in
a realistic scenario the assumption that the provider of the devices
is fully reliable is often overoptimistic: imperfections unavoidably
affect the implementation, thus turning it away from its ideal
description. A device independent description of an experimental setup
does not make any assumption on the involved devices, which are
regarded as ``black boxes'', while only the knowledge of the
correlations between preparations, measurements and outcomes is
considered. In this scenario, a natural question is whether it is
possible to derive some properties of the non-characterized devices
instead of assuming them, building only upon the knowledge of these
correlations. In general one could be interested in bounding the
dimension of the systems prepared by a non-characterized device; one
could also ask whether a source is intrinsically quantum or can be
described classically.  The framework of device independent dimension
witnesses (DIDWs) provides an effective answer to these questions,
suitable for experimental implementation and for application in
different contexts, such as quantum key distribution (QKD) or quantum
random access codes (QRACs).

DIDWs were first introduced in \cite{BPAGMS08} in the context of
non-local correlations for multi-partite systems. Subsequently, the
problem of DIDWs was related to that of QRACs in \cite{WCD08}, and in
\cite{WP09} it was reformulated from a dynamical viewpoint allowing
one to obtain lower bounds on the dimensionality of the system from
the evolution of expectation values. A general formalism for tackling
the problem of DIDWs in a prepare and measure scenario was recently
developed in \cite{GBHA10}. The derived formalism allows one to
establish lower bounds on the classical and quantum dimension
necessary to reproduce the observed correlations. Shortly after, the
photon experimental implementations followed, making use of
polarization and orbital angular momentum degrees of
freedom~\cite{HGMBAT12} or polarization and spatial
modes~\cite{ABCB11} to generate ensembles of classical and quantum
states, and certifying their dimensionality as well as their quantum
nature.

DIDWs also allow reformulating several applications in a
device-independent framework. For example, dimension witnesses can be
used to share a secret key between two honest parties.
In~\cite{PB11}, the authors present a QKD protocol whose security
against individual attacks in a semi-device independent scenario is
based on DIDWs. The scenario is called semi-device independent because
no assumption is made on the devices used by the honest parties,
except that they prepare and measure systems of a given dimension.
Another application is given by QRACs, that make it possible to encode
a sequence of qubits in a shorter one in such a way that the receiver
of the message can guess any of the original qubits with maximum
probability of success. In \cite{HPZGZ11,PZ10} QRACs were considered
in the semi-device independent scenario, with a view to their
application in randomness expansion protocols.

Clearly any experimental implementation of DIDWs is unavoidably
affected by losses - that can be modeled as a constraint on the
measurements - and can reduce the value of the dimension witness, thus
making it impossible to witness the dimension of a system.  Based on
these considerations, it is relevant to understand whether it is
possible to perform reliable dimension witnessing in realistic
scenarios and, in particular, with non-optimal detection efficiency.
We refer to this problem as the {\em robustness of device independent
  dimension witnesses}. Despite its relevance for experimental
implementations and practical applications, this problem has not been
addressed in previous literature. The aim of this work is to fill this
gap. We consider the case where shared randomness between preparations
and measurements is allowed. Our main result is to provide the
threshold in the detection efficiency that can be tolerated in
dimension witnessing, in the case where one is interested in the
dimension of the system as well as in the case where one's concern is
to discriminate between its quantum or classical nature.

The paper is structured as follows. In Section \ref{sect:didw} we
introduce the sets of quantum and classical correlations and the
concept of dimension witness as a tool to discriminate whether a given
correlation matrix belongs to these sets. Section \ref{sect:sets}
discusses some properties of the sets of classical and quantum
correlations. In Section \ref{sect:robustness} we provide a threshold
in the detection efficiency that is allowed in witnessing the
dimensionality of a system or in discriminating between its classical
or quantum nature, as a function of the dimension of the system. We
summarize our results and discuss some further developments - such as
dimension witnessing in the absence of correlations between
preparations and measurements or entangled assisted dimension
witnessing - in Section \ref{sect:conclusion}.

\section{Device independent dimension witnesses}
\label{sect:didw}

Let us first fix the notation~\cite{NC00}. Given a Hilbert space
$\hilb{H}$, we denote with $\lin{\hilb{H}}$ the space of linear
operators $X : \hilb{H} \to \hilb{H}$. A quantum state in $\hilb{H}$
is represented by a density matrix, namely a positive semi-definite
matrix $\rho \in \lin{\hilb{H}}$ such that $\Tr[\rho] = 1$. Given a
pure state $\ket{\psi} \in \hilb H$, we denote with $\psi :=
\ket{\psi}\bra{\psi}$ the corresponding projector. A set $R =
\{\rho_i\}$ of states is said to be classical when the states commute
pairwise, namely $[\rho_i,\rho_k] = 0$ for any $i,k$. Here, the notion
of classicality has to be understood in an operational sense: in our
scenario, the observed correlations can be reproduced by a classical
variable taking $d$ possible values if, and only if, they can be
reproduced by measurements on pairwise commuting states acting on a
Hilbert space of dimension $d$ (this will become clearer after Lemma
\ref{thm:classicalpovms} below). A general quantum measurement is
represented by a Positive-Operator Valued Measure (POVM), namely a set
of positive semi-definite Hermitian matrices $\Pi^j$ such that
$\sum_j\Pi^j = \identity$. A POVM $\Pi = \{ \Pi^j \}$ is said to be
classical when $[\Pi^j, \Pi^l] = 0$ for any $j,l$. The joint
probability of outcome $j$ given input state $\rho_i$ is given by the
Born rule, namely $p_{j|i} = Tr[\rho_i\Pi^j]$.

The general setup introduced in Ref. \cite{GBHA10} for performing
device independent dimension witnessing is given by a preparing device
(let us say on Alice's side) and a measuring device (on Bob's side) as
in Fig. \ref{fig:setup}.
\begin{figure}[htb]
  \includegraphics[width=0.66\columnwidth]{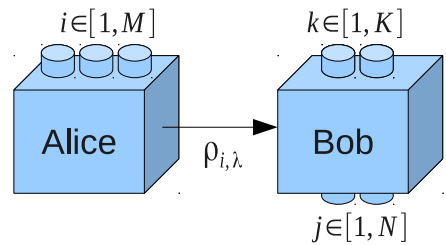}
  \caption{(Color online) Setup for witnessing the dimension of a
    quantum or classical system. In the most general scenario
    considered here, Alice and Bob share a hidden random variable
    $\lambda$. Alice (on the left hand side) owns a preparing device
    which sends the state $\rho_{i,\lambda}$ to Bob whenever Alice
    presses button $i \in [1,M]$. Bob owns a measuring device that
    performs measurement $\Pi_{k,\lambda}$ on the received state
    whenever Bob presses button $k \in [1,K]$, giving the outcome $j
    \in [1,N]$.}
  \label{fig:setup}
\end{figure}
In the most general scenario, the devices may share a priori
correlated information, classical and quantum. However, in many
realistic situations, one can assume that the preparing and measuring
devices are uncorrelated and that all the correlations observed
between the preparation and the measurement are due to the mediating
particle connecting the two devices. An intermediate and also valid
possibility is to assume that the devices only share classical
correlations. In this case, the value of a random variable $\lambda$
distributed according to $q_{\lambda}$ is accessible to preparing and
measuring devices. In this work we focus on this last
possibility. Alice chooses the value of index $i \in [1,M]$ and sends
a fixed state $\rho_{i,\lambda} \in \lin{\hilb{H}}$ to Bob. Bob
chooses the value of index $k \in [1,K]$ and performs a fixed POVM
$\Pi_{k,\lambda}$ on the received state, obtaining outcome $j \in
[1,N]$. After repeating the experiment several times (we consider here
the asymptotic case), they collect the statistics about indexes
$i,j,k$ obtaining the conditional probabilities $p_{j|i,k}$. Note that
we also implicitly assume that we are dealing with independent and
identically distributed events.

We now introduce the set $\set Q$ (the set $\set C$) of correlations
achievable with quantum (classical) preparations.

\begin{dfn}[Set of quantum correlations]
  For any $M,K,N,d \in \mathbb{N}$ we define the {\em set of quantum
    correlations} $\set{Q}(M,K,N,d)$ as the set of correlations
  $p_{j|i,k}$ with $i \in [1,M]$, $k \in [1,K]$ and $j \in [1,N]$ such
  that there exist a Hilbert space $ \hilb{H}$ with $\dim\hilb{H}=d$,
  a quantum set $R = \{ \rho_i \in \lin\hilb{H}\}_1^M$ of states and a
  set $P = \{ \Pi_k \}_1^K$ of POVMs $\Pi_k = \{\Pi_k^j \in
  \lin\hilb{H}\}_1^N$ for which $p_{j|i,k} = \Tr[\rho_i \Pi_k^j]$,
  namely
  \begin{align*}
     \set{Q} := \{p \; | \; & \exists \textrm{ $d$-dimensional Hilbert
       space } \hilb{H},\\ & \exists \textrm{ quantum set } \{ \rho_i
     \in \lin\hilb{H}\}_1^M \textrm{ of states},\\ & \exists \textrm{
       set } \{ \Pi_k \}_1^K \textrm{ of POVMs } \Pi_k = \{\Pi_k^j \in
     \lin\hilb{H}\}_1^N \\ & \textrm{such that } p_{j|i,k} =
     \Tr[\rho_i \Pi_k^j]\}.
  \end{align*}
\end{dfn}

\begin{dfn}[Set of classical correlations]
  For any $M,K,N,d \in \mathbb{N}$ we define the {\em set of classical
    correlations} $\set{C}(M,K,N,d)$ as the set of correlations
  $p_{j|i,k}$ with $i \in [1,M]$, $k \in [1,K]$ and $j \in [1,N]$ such
  that there exist a Hilbert space $ \hilb{H}$ with $\dim\hilb{H}=d$,
  a classical set $R = \{ \rho_i \in \lin\hilb{H}\}_1^M$ of states and
  a set $P = \{ \Pi_k \}_1^K$ of POVMs $\Pi_k = \{\Pi_k^j \in
  \lin\hilb{H}\}_1^N$ for which $p_{j|i,k} = \Tr[\rho_i \Pi_k^j]$,
  namely
  \begin{align*}
     \set{C} := \{p \; | \; & \exists \textrm{ $d$-dimensional Hilbert
       space } \hilb{H},\\ & \exists \textrm{ classical set } \{
     \rho_i \in \lin\hilb{H}\}_1^M \textrm{ of states},\\ & \exists
     \textrm{ set } \{ \Pi_k \}_1^K \textrm{ of POVMs } \Pi_k =
     \{\Pi_k^j \in \lin\hilb{H}\}_1^N \\ & \textrm{such that }
     p_{j|i,k} = \Tr[\rho_i \Pi_k^j]\}.
  \end{align*}
\end{dfn}

We write $\set{Q}$ and $\set{C}$ omitting the parameters $M,K,N,d$
whenever they are clear from the context.

\begin{rmk}\label{rmk:sr}
  We notice that, when shared randomness is allowed between quantum
  (classical) preparations and measurements, the set of achievable
  correlations is given by $\conv\set Q$ ($\conv\set C$), where for
  any set $\set X$ we denote with $\conv\set X$ the convex hull of
  $\set X$.
\end{rmk}

The following Lemma shows that it is not restrictive to consider only
classical POVMs, that is, measurements consisting of commuting
operators, in the definitions of classical correlations.
\begin{lmm} \label{thm:classicalpovms}
  For any correlation $p = \{ p_{j|i,k} \} \in \set C$ there exist a
  classical set $R = \{ \rho_i \}$ of states and a set $Q = \{
  \Lambda_k\}$ of classical POVMs $\Lambda_k = \{ \Lambda_k^j \}$ such
  that $p_{j|i,k} = \Tr[\rho_i \Lambda_k^j]$.
\end{lmm}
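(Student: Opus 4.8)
The plan is to exploit the defining property of $\set C$ that the states $\rho_i$ commute pairwise and therefore admit a common eigenbasis. First I would unpack the hypothesis $p \in \set C$: there exist a $d$-dimensional Hilbert space $\hilb H$, a classical set $R = \{\rho_i\}$ of states, and a set $P = \{\Pi_k\}$ of (not necessarily classical) POVMs with $p_{j|i,k} = \Tr[\rho_i \Pi_k^j]$. Since $[\rho_i,\rho_{i'}]=0$ for all $i,i'$, these states are simultaneously diagonalizable: there is an orthonormal basis $\{\ket{n}\}_{n=1}^d$ of $\hilb H$ such that $\rho_i = \sum_n \rho_i^{(n)} \ketbra{n}{n}$ with $\rho_i^{(n)} \geq 0$ and $\sum_n \rho_i^{(n)} = 1$.

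Next I would define, for each $k$ and $j$, the pinched operator $\Lambda_k^j := \sum_n \bra{n}\Pi_k^j\ket{n}\, \ketbra{n}{n}$, obtained by keeping only the diagonal entries of $\Pi_k^j$ in the basis $\{\ket{n}\}$. By construction each $\Lambda_k^j$ is diagonal in this single common basis, so for every fixed $k$ the operators $\{\Lambda_k^j\}_j$ commute pairwise and the POVM $\Lambda_k$ is classical. I then check that each $\Lambda_k$ is a genuine POVM: positivity of $\Lambda_k^j$ follows from $\bra{n}\Pi_k^j\ket{n} \geq 0$ (a consequence of $\Pi_k^j \geq 0$), and completeness follows from $\sum_j \Lambda_k^j = \sum_n \bra{n}\bigl(\sum_j \Pi_k^j\bigr)\ket{n}\,\ketbra{n}{n} = \sum_n \ketbra{n}{n} = \identity$, using $\sum_j \Pi_k^j = \identity$.

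Finally I would verify that the statistics are unchanged. Because both $\rho_i$ and $\Lambda_k^j$ are diagonal in $\{\ket{n}\}$, a direct computation gives $\Tr[\rho_i \Lambda_k^j] = \sum_n \rho_i^{(n)} \bra{n}\Pi_k^j\ket{n} = \Tr[\rho_i \Pi_k^j] = p_{j|i,k}$, where the middle step records that only the diagonal entries of $\Pi_k^j$ contribute to the trace against the diagonal $\rho_i$. Thus the original classical set $R$ together with the new classical POVMs $Q = \{\Lambda_k\}$ reproduces $p$, which is the claim. Note that the states need not be modified at all.

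I do not expect a genuine obstacle: the argument rests entirely on the observation that pairwise commuting states share an eigenbasis, after which the pinching (dephasing) map in that basis leaves the observed correlations invariant while forcing the measurement operators to become mutually commuting. The only point requiring minor care is confirming that pinching preserves both positivity and the completeness relation of each POVM, and both follow immediately from the linearity of the map and $\sum_j \Pi_k^j = \identity$.
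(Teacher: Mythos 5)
Your proposal is correct and follows essentially the same route as the paper: both pinch each $\Pi_k^j$ to its diagonal in the common eigenbasis of the pairwise commuting states and verify that this preserves positivity, completeness, and the observed statistics. Your write-up is slightly more careful in spelling out these checks (and in using a basis index distinct from the state index $i$), but the underlying argument is identical.
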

\begin{proof}
  By hypothesis there exist a classical set $R = \{ \rho_i \}$ of
  states and a set $P = \{ \Pi_k \}$ of POVMs $\Pi_k = \{ \Pi_k^j \}$
  such that $p_{j|i,k} = \Tr[\rho_i\Pi_k^j]$ for any $i,j,k$.  Take
  $\Lambda_k^j = \sum_i \bra{i} \Pi_k^j \ket{i} \ket{i}\bra{i}$ where
  $\{ \ket{i} \}$ is an orthonormal basis with respect to which the
  $\rho_i$'s are diagonal (it is straightforward to verify that
  $\Lambda_k^j \ge 0$ for any $k,j$ and $\sum_j \Lambda_k^j =
  \identity$ for any $k$). We have $p_{j|i,k} =
  \Tr[\rho_i\Lambda_k^j]$ for any $i,j,k$, which proves the statement.
\end{proof}

Lemma \ref{thm:classicalpovms} thus proves that every set of
probabilities obtained with commuting states can be performed with
classical states and classical POVMs. This clearly implies that
commuting states may be equally regarded as classical variables, and
commuting-element measurements as read-out of classical variables.

We can now introduce DIDWs. Building only on the knowledge of
$p_{j|i,k}$, our task is to provide a lower bound on the dimension $d$
of $\hilb{H}$

\begin{dfn}
  For any set of correlations $\set X$ between $M$ preparations and
  $K$ measurements with $N$ outcomes, a {\em device independent
    dimension witness} $W_{\set X}(p)$ is a function of the
  conditional probability distribution $p = \{ p_{j|i,k} \}$ with $i
  \in [1,M]$, $k \in [1,K]$, and $j \in [1,N]$ such that
  \begin{align}\label{eq:didw}
    W_{\set X}(p) > L \Rightarrow p \not\in \set X,
  \end{align}
  for some $L$ which depends on $W_{\set X}$.
\end{dfn}

Interestingly, in many situations the value of the bound $L$ in the
definition of a dimension witness varies depending on whether one is
interested in classical or quantum ensembles of states. This gives a
second application for dimension witnesses, namely quantum
certification: if the system dimension is assumed, dimension witnesses
allow certifying its quantum nature. It is precisely this quantum
certification what makes dimension witnesses useful for quantum
information protocols \cite{PB11,HPZGZ11}.

Motivated by Remark \ref{rmk:sr}, for any $M,N,K,d \in \mathbb{N}$
when $\set X = \conv\set C(M,N,K,d)$ [when $\set X = \conv\set
  Q(M,N,K,d)$] we say that $W_{\set X}(p)$ is a classical (quantum)
dimension witness for dimension $d$ in the presence of shared
randomness. Given a set $R = \{ \rho_{i,\lambda} \}$ of states and a
set $P = \{ \Pi_{k,\lambda} \}$ of POVMs $\Pi_{k,\lambda} = \{
\Pi_{k,\lambda}^j \}$, we define $W_{\conv\set C}(R,P) := W_{\conv\set
  C} (p)$ with $p = \{ p_{j|i,k} \}$ and $p_{j|i,k} = \sum_{\lambda}
q_{\lambda} \Tr[\rho_{i,\lambda} \Pi_{k,\lambda}^j]$, and analogously
for $W_{\conv\set Q}$.

In this work we will consider only linear DIDWs, namely inequalities
of the form of Eq. \eqref{eq:didw} such that
\begin{align}\label{eq:ldidw}
  W(p) := \vec{c} \cdot \vec{p} = \sum_{i,j,k} c_{i,j,k} p_{j|i,k},
\end{align}
where $\vec{c}$ is a constant vector.

Notice that for any function $W(p)$ and constant $L$, the witness
$W(p) > L$ is only a representative of a class of equivalent witnesses
such that if $W'(p) > L'$ is a member of the class, then $W(p) > L$ if
and only if $W'(p) > L'$ for any conditional distribution $p$. The
following Lemma provides a transformation that preserves this
equivalence.
\begin{lmm}\label{thm:norm}
  Given a function $W(p) = \sum_{i,j,k} c_{i,j,k} p_{j|i,k}$ and a
  constant $L$, take $W'(p) = \sum_{i,j,k} c'_{i,j,k} p_{j|i,k}$ with
  $c'_{i,j,k} = c_{i,j,k} + \alpha_{i,k}$ and $L' = L + \sum_{i,k}
  \alpha_{i,k}$ for any $\alpha_{i,k}$ that does not depend on outcome
  $j$. Then one has $W(p) > L$ if and only if $W'(p) > L'$ for any
  $p$.
\end{lmm}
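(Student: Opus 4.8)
The plan is to compute the difference $W'(p) - W(p)$ directly and show that it equals a constant independent of $p$, matching the constant $L' - L$ fixed by the definition. Once both functions are seen to differ from their respective thresholds by the same amount, the strict inequalities $W(p) > L$ and $W'(p) > L'$ will be equivalent for every conditional distribution $p$.

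First I would substitute the definition $c'_{i,j,k} = c_{i,j,k} + \alpha_{i,k}$ into $W'(p)$ to obtain $W'(p) = W(p) + \sum_{i,j,k} \alpha_{i,k}\, p_{j|i,k}$. The crucial step is to evaluate this residual sum. Since $\alpha_{i,k}$ carries no dependence on the outcome index $j$, it factors out of the summation over $j$, leaving $\sum_{i,k} \alpha_{i,k} \bigl( \sum_j p_{j|i,k} \bigr)$. At this point I would invoke the normalization of the conditional probability distribution, $\sum_j p_{j|i,k} = 1$ for every fixed pair $(i,k)$, which holds because $p$ assigns, for each choice of preparation $i$ and measurement $k$, a genuine probability distribution over the outcomes $j$. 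The inner sum thus collapses to unity, yielding $W'(p) = W(p) + \sum_{i,k} \alpha_{i,k}$.

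Finally, recalling that $L' = L + \sum_{i,k}\alpha_{i,k}$ by definition, I would subtract to conclude $W'(p) - L' = W(p) - L$ for all $p$; hence the two differences share the same sign, and $W(p) > L$ holds if and only if $W'(p) > L'$. There is no substantial obstacle in this argument; the only point demanding care is the explicit appeal to the normalization $\sum_j p_{j|i,k} = 1$, since it is precisely this constraint that turns the $j$-independent shift $\alpha_{i,k}$ into a fixed additive constant rather than a $p$-dependent contribution, thereby preserving the equivalence of the two witnesses.
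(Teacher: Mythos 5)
Your proof is correct and is exactly the direct computation that the paper invokes (its own proof consists of the single line ``It follows immediately by direct computation''). You have simply made explicit the one ingredient that computation relies on, namely the normalization $\sum_j p_{j|i,k}=1$, so there is nothing further to add.
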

\begin{proof}
  It follows immediately by direct computation.
\end{proof}

In the following our task will be to find a set $R$ of quantum states
and a set $P$ of POVMs such that a linear witness $W(R,P)$ maximally
violates inequality \eqref{eq:didw}. The following Lemma allows us to
simplify the optimization problem.

\begin{lmm}\label{thm:pure}
  The maximum of any linear dimension witness $W(R,P)$ is achieved by
  an ensemble $R$ of pure states and without shared randomness.
\end{lmm}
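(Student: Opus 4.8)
The plan is to prove the two claims separately, in each case exploiting the fact that $W$ is linear in both the states and the measurement operators, so that each reduction becomes a standard statement about maximising a linear functional over a convex set. First I would eliminate the shared randomness, and then purify the states of the resulting deterministic strategy, checking along the way that the Hilbert-space dimension $d$ is never increased.

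To remove the shared randomness I would regroup the witness, using linearity of the trace, as
\begin{align*}
  W(R,P) = \sum_{i,j,k} c_{i,j,k} \sum_\lambda q_\lambda \Tr[\rho_{i,\lambda}\Pi_{k,\lambda}^j] = \sum_\lambda q_\lambda\, W(R_\lambda,P_\lambda),
\end{align*}
where $R_\lambda = \{\rho_{i,\lambda}\}$ and $P_\lambda = \{\Pi_{k,\lambda}\}$ are the state set and the POVMs conditioned on the value $\lambda$. Since $\{q_\lambda\}$ is a probability distribution, $W(R,P)$ is a convex combination of the numbers $W(R_\lambda,P_\lambda)$ and therefore cannot exceed the largest of them; choosing a maximising $\lambda^\ast$ produces a deterministic strategy $(R_{\lambda^\ast},P_{\lambda^\ast})$ that uses no shared randomness, lives in the same dimension $d$, and attains a value at least $W(R,P)$. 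This is the concrete version of the observation in Remark \ref{rmk:sr} that the maximum of a linear function over $\conv\set Q$ (respectively $\conv\set C$) is attained at an extreme point lying in $\set Q$ (respectively $\set C$).

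For the purification step I would fix that deterministic strategy and collect, for each preparation index $i$, the coefficients multiplying $\rho_{i,\lambda^\ast}$ into a single Hermitian operator $M_i := \sum_{j,k} c_{i,j,k}\Pi_{k,\lambda^\ast}^j$, so that $W(R_{\lambda^\ast},P_{\lambda^\ast}) = \sum_i \Tr[\rho_{i,\lambda^\ast} M_i]$. Because the POVMs are now fixed, each summand is linear in a single state and independent of the remaining ones, hence each can be optimised separately. Writing the spectral decomposition $\rho_{i,\lambda^\ast} = \sum_m p_{i,m}\psi_{i,m}$ into pure states expresses $\Tr[\rho_{i,\lambda^\ast}M_i]$ as a convex combination of the values $\Tr[\psi_{i,m}M_i]$, which is bounded above by the largest of them; replacing each $\rho_{i,\lambda^\ast}$ by the pure state $\psi_{i,m^\ast}$ attaining that maximum does not decrease $W$ and keeps everything in dimension $d$.

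I expect the only genuine subtlety to arise in the classical case, where the purification must preserve commutativity of the ensemble. There I would invoke Lemma \ref{thm:classicalpovms} to take the states diagonal in a common orthonormal basis and the POVM elements classical; each $M_i$ is then diagonal in that basis, the maximising pure states are common basis projectors, and these pairwise commute, so the replacement remains inside $\set C$. The quantum case imposes no such constraint. Combining the two reductions shows, in both cases, that the maximum of $W(R,P)$ is achieved by an ensemble of pure states and without shared randomness.
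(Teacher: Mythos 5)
Your proof is correct and takes essentially the same route as the paper: the paper's entire proof is the single sentence ``The thesis follows immediately from linearity,'' and your argument is the careful expansion of exactly that observation (a convex decomposition over $\lambda$ followed by a spectral decomposition of each state, each time bounding a convex combination by its largest term). Your additional care in preserving commutativity of the ensemble in the classical case via Lemma \ref{thm:classicalpovms} addresses a subtlety the paper leaves implicit, but it is a refinement of the same linearity argument rather than a different approach.
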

\begin{proof}
  The thesis follows immediately from linearity.
\end{proof}

Due to Lemma \ref{thm:pure} the maximization of Eq. \eqref{eq:ldidw}
is equivalent to the maximization of
\begin{align*}
  W(R,P) = \sum_{i,j,k} c_{i,j,k} \bra{\psi_i}\Pi_k^j\ket{\psi_i},
\end{align*}
over the sets $R = \{\psi_i\}$ of pure states and the sets $P = \{
\Pi_k \}$ of POVMs $\Pi_k = \{ \Pi_k^j \}$.

\section{Properties of the sets of quantum and classical
correlations} \label{sect:sets}

Before moving to the main results in this article, we discuss in this
section several properties of the sets of classical and quantum
correlations. In particular, we study whether the sets are convex and
prove some inclusions among them.  These results allow gaining a
better understanding of the geometry of these sets of correlations.

Since classical correlations can always be reproduced by quantum ones,
we immediately have $\set C \subseteq \set Q$ and $\conv\set C
\subseteq \conv\set Q$. Moreover, by definition we have $\set C
\subseteq \conv\set C$ and $\set Q \subseteq \conv\set Q$. Here we
show an example where $\set C$ is non-convex (namely $\set C \subset
\conv \set C$) and $\set C \subset \set Q$. Take $M=3$, $K = 2$,
$N=2$, and $d=2$ in the setup of Figure \ref{fig:setup}.  Consider the
following conditional probability distribution $p_{j|i,k}$ of
obtaining outcome $j$ on Bob's device given input $i$ on Alice's and
$k$ on Bob's
\begin{align}\label{eq:p}
  p_{j|i,1} = \left( \begin{array}{cc} 1 & 0 \\ \frac12 & \frac12 \\ 0
    & 1 \end{array} \right), \qquad p_{j|i,2} =
  \left( \begin{array}{cc} \frac12 & \frac12 \\ 1 & 0 \\ \frac12 &
    \frac12 \end{array} \right),
\end{align}
where rows and columns are labeled by $i$ and $j$, respectively.

First we show that $p \in \conv\set C$. Indeed $p$ can be obtained
when Alice and Bob share classical correlations represented by a
uniformly distributed random variable $\lambda$ taking values $1,2$
making use of the classical set $R = \{\rho_{i,\lambda}\}$ of states
and of the set $P = \{\Pi_{k,\lambda}\}$ of classical POVMs
$\Pi_{k,\lambda} = \{ \Pi_{k,\lambda}^j \}$, with
\begin{align*}
  \rho_{1,1} = \ketbra{0}{0}, \quad \rho_{2,1} = \ketbra{0}{0}, \quad
  \rho_{3,1} = \ketbra{1}{1}, \\ \rho_{1,2} = \ketbra{0}{0}, \quad
  \rho_{2,2} = \ketbra{1}{1}, \quad \rho_{3,2} = \ketbra{1}{1},
\end{align*}
and
\begin{align*}
  \Pi^1_{1,1} = \ketbra{0}{0}, \quad \Pi^1_{2,1} = \ketbra{0}{0},
  \\ \Pi^1_{1,2} = \ketbra{0}{0}, \quad \Pi^1_{2,2} = \ketbra{1}{1},
\end{align*}
which proves that $p=\{p_{j|i,k}=\sum_{\lambda} q_{\lambda}
\Tr[\rho_{i,\lambda} \Pi_{k,\lambda}^j]\} \in \conv\set C$.

Now we show that $p \in \set Q$. Indeed $p$ can be obtained by Alice
and Bob making use of the quantum set $R = \{\rho_i\}$ of states and
of the set $P = \{ \Pi_k \}$ of quantum POVMs $\Pi_k = \{ \Pi_k^j \}$,
with
\begin{align*}
  \rho_1 = \ketbra{0}{0}, \quad \rho_2 = \ketbra{+}{+}, \quad \rho_3 =
  \ketbra{1}{1},
\end{align*}
and
\begin{align*}
  \Pi_1^1 = \ketbra{0}{0}, \quad \Pi_2^1 = \ketbra{+}{+},
\end{align*}
which proves that $p \in \set Q$.

Finally, we verify that if Alice and Bob make use of classical sets of
states and POVMS and do not have access to shared randomness there is
no way to achieve the probability distribution $p$ given by
Eq. \eqref{eq:p}. Indeed, to have perfect discrimination between
$\rho_1$ and $\rho_3$ with POVM $\Pi_1$ [see Eq. \eqref{eq:p}], one
must take $\rho_1$ and $\rho_3$ orthogonal - let us say without loss
of generality $\rho_1 = \ket{0}\bra{0}$ and $\rho_3 = \ket{1}\bra{1}$,
and $\Pi_1^1 = \ket{0}\bra{0}$ and $\Pi_1^2 = \ket{1}\bra{1}$. Due to
the hypothesis of classicality of the sets of states, $\rho_2$ must be
a convex combination of $\rho_1$ and $\rho_3$. Then, in order to have
$p_{j|2,1}$ as in Eq. \eqref{eq:p}, one has to choose $\rho_2 =
(\rho_1 + \rho_3)/2 = \identity/2$. Finally, the only possible choice
for $\Pi_2$ is $\Pi_2^1 = \identity$ and $\Pi_2^1 = 0$, which is
incompatible with the remaining entries of $p_{j|i,2}$ in
Eq. \eqref{eq:p}. This proves that $p \not\in \set C$.

The relations between the sets of quantum and classical correlations
are schematically depicted in Fig.  \ref{fig:sets}.
\begin{figure}[htb]
  \includegraphics[width=0.5\columnwidth]{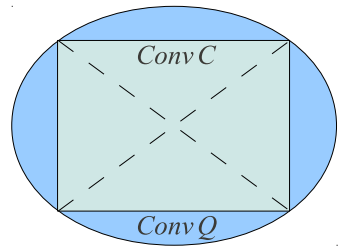}
  \caption{(Color online) Schematic representation of the sets of
    classical and quantum correlations between preparations,
    measurements and outcomes. Dashed line represents the (non-convex)
    set $\set C$ of classical correlations without shared randomness;
    the rectangle represents the set $\conv\set C$ of classical
    correlations with shared randomness; the ellipsoid represents the
    set $\conv\set Q$ of quantum correlations with shared randomness.}
  \label{fig:sets}
\end{figure}

\section{Robustness of dimension witnesses}
\label{sect:robustness}

In practical applications, losses (due to imperfections in the
experimental implementations or artificially introduced by a malicious
provider) can noticeably affect the effectiveness of dimension
witnessing. The main result of this Section is to provide a threshold
value for the detection efficiency that allows one to witness the
dimension of the systems prepared by a source or to discriminate
between its quantum or classical nature.

The task is to determine whether a given conditional probability
distribution belongs to a particular convex set, namely $\conv\set C$
or $\conv\set Q$ (see Remark \ref{rmk:sr}). The situation is
illustrated in Figure \ref{fig:task}.
\begin{figure}[htb]
  \includegraphics[width=0.5\columnwidth]{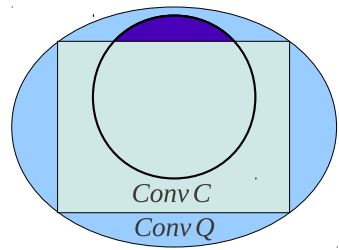}
  \caption{(Color online) The Figure illustrates the problem of the
    robustness of device independent dimension witness. The convex
    hulls $\conv\set Q$ and $\conv\set C$ of the sets of quantum and
    classical correlations are represented as in Figure
    \ref{fig:sets}. In the presence of loss, only a subset of the
    possible correlations is attainable. The subset, surrounded by
    bold line in the figure, is parametrized by detection efficiency
    $\eta$. The task is to find the threshold value in $\eta$ such
    that dimension witnessing is still possible. For example, when the
    task is to discriminate between the quantum or classical nature of
    a source, one is interested in achieving correlations in the dark
    area of the Figure, and our goal is to determine the values of
    $\eta$ such that this area is not null.}
  \label{fig:task}
\end{figure}
The experimental implementation is constrained to be lossy, namely it
can be modeled considering an ideal preparing device followed by a
measurement device with non-ideal detection efficiency. This means
that any POVM $\Pi_{k,\lambda}$ on Bob's side is replaced by a POVM
$\Pi_{k,\lambda}^{(\eta)}$ with detection efficiency $\eta$, namely
\begin{align}\label{eq:lossypovm}
  \Pi_{k,\lambda}^{(\eta)} := \{ \eta\Pi_{k,\lambda},
  (1-\eta)\identity\}.
\end{align}
We notice that each lossy POVM has one outcome more than the ideal
one, corresponding to the no-click event. In a general model, the
detection efficiency $\eta$ may be different for any POVM
$\Pi_{k,\lambda}$. Nevertheless, in the following we assume that they
have the same detection efficiency, which is a reasonable assumption
if the detectors have the same physical implementation
\cite{note1}. Analogously given a set $P = \{ \Pi_{k,\lambda} \}$ of
POVMs we will denote with $P^{(\eta)} = \{ \Pi_{k,\lambda}^{(\eta)}
\}$ the corresponding set of lossy POVMs. Upon defining $p^{(\eta)} :=
\{ p_{j|i,k}^{(\eta)}\}$ with $p_{j|i,k}^{(\eta)} = \sum_{\lambda}
q_\lambda \Tr[\rho_{i,\lambda} \Pi_{k,\lambda}^{j,(\eta)}]$, one
clearly has
\begin{align}\label{eq:lossydist}
  p^{(\eta)} = \eta p^{(1)} + (1-\eta)p^{(0)}.
\end{align}

To attain our task we maximize a given dimension witness over the set
of lossy POVMs as given by Eq. \eqref{eq:lossypovm}. Due to the model
of loss introduced in Eq. \eqref{eq:lossypovm} and to the freedom in
the normalization of dimension witnesses given by Lemma
\ref{thm:norm}, in the following without loss of generality for any
dimension witness $W$ as given in Eq. \eqref{eq:ldidw} it is
convenient to take
\begin{align}\label{eq:norm}
  c_{i,N,k} = 0, \quad \forall i,k.
\end{align}

Then we have the following Lemma.
\begin{lmm}\label{thm:lossy}
  Given a set $R = \{\rho_{i,\lambda}\}_{i=1}^M$ of states and a set
  $P = \{ \Pi_{k,\lambda} \}_{k=1}^K$ of POVMs $\Pi_{k,\lambda} =
  \{\Pi_{k,\lambda}^j\}_{j=1}^{N-1}$, for any linear dimension witness
  $W(p) = \sum_{i,j,k} c_{i,j,k} p_{j|i,k}$ with $i \in [1,M]$, $j \in
       [1,N]$, and $k \in [1,K]$ normalized as in Eq. \eqref{eq:norm}
       one has
  \begin{align*}
    W(R, P^{(\eta)}) = \eta W(R, P^{(1)}).
  \end{align*}
\end{lmm}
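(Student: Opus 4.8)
The plan is to compute $W(R,P^{(\eta)})$ directly from the definition of the lossy POVM and to let the normalization \eqref{eq:norm} do all the work. First I would read off the lossy outcome operators from Eq. \eqref{eq:lossypovm}: for the genuine outcomes $j \in [1,N-1]$ one has $\Pi_{k,\lambda}^{j,(\eta)} = \eta\,\Pi_{k,\lambda}^j$, while the extra no-click outcome, which I identify with $j = N$, is $\Pi_{k,\lambda}^{N,(\eta)} = (1-\eta)\identity$. Substituting into $p_{j|i,k}^{(\eta)} = \sum_\lambda q_\lambda \Tr[\rho_{i,\lambda}\Pi_{k,\lambda}^{j,(\eta)}]$ and using linearity of the trace immediately gives $p_{j|i,k}^{(\eta)} = \eta\, p_{j|i,k}^{(1)}$ for every $j \le N-1$; for the no-click outcome, $p_{N|i,k}^{(\eta)} = (1-\eta)\sum_\lambda q_\lambda \Tr[\rho_{i,\lambda}] = 1-\eta$, since each $\rho_{i,\lambda}$ is normalized and $\sum_\lambda q_\lambda = 1$.

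Next I would insert these into $W(R,P^{(\eta)}) = \sum_{i,j,k} c_{i,j,k}\, p_{j|i,k}^{(\eta)}$ and split the sum over $j$ into the genuine part $j \in [1,N-1]$ and the single term $j = N$. The $j = N$ term carries the coefficient $c_{i,N,k}$, which vanishes for all $i,k$ by the normalization \eqref{eq:norm}; hence the entire no-click contribution drops out. Each of the surviving terms carries the common factor $\eta$ coming from $p_{j|i,k}^{(\eta)} = \eta\, p_{j|i,k}^{(1)}$, so pulling $\eta$ out and re-adding the (now harmless, since $c_{i,N,k}=0$) $j = N$ term recovers $\eta \sum_{i,j,k} c_{i,j,k}\, p_{j|i,k}^{(1)} = \eta\, W(R,P^{(1)})$, which is the claim.

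The argument is a direct calculation, so I do not expect a genuine obstacle; the only delicate point is bookkeeping, namely making sure the no-click event is treated as the $N$-th outcome and that its coefficient is exactly the one annihilated by \eqref{eq:norm}. This is precisely the reason that normalization was chosen in the first place. I would also note that the state normalization $\Tr[\rho_{i,\lambda}] = 1$ guarantees the no-click probability is the same value $1-\eta$ for all inputs, although this fact is not even needed for the conclusion once its coefficient is set to zero.
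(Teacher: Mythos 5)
Your proof is correct and follows essentially the same route as the paper: the paper invokes the decomposition $p^{(\eta)} = \eta p^{(1)} + (1-\eta)p^{(0)}$ from Eq. \eqref{eq:lossydist} and observes that $W(p^{(0)}) = 0$ by the normalization \eqref{eq:norm}, which is exactly your outcome-by-outcome calculation (the $j=N$ no-click term carries all the $(1-\eta)$ contribution and is annihilated by $c_{i,N,k}=0$) written more explicitly. No gaps.
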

\begin{proof}
  One has
  \begin{align*}
    W(R,P^{(\eta)}) & = \sum_{i,j,k} c_{i,j,k} \left[\eta
      p_{j|i,k}^{(1)} + (1-\eta)p_{j|i,k}^{(0)}\right]\\ & = \eta \,
    W(R,P^{(1)}),
  \end{align*}
  where the first equality follows from Eq. \eqref{eq:lossydist} and
  the second from the fact that $W(p^{(0)}) = 0$ due to the
  normalization given in Eq. \eqref{eq:norm}.
\end{proof}

In particular from Lemma \ref{thm:lossy} it follows that for any
linear dimension witness $W$ one has
\begin{align*}
  \max_{R, P} W(R,P^{(\eta)}) = \eta \max_{R,P} W(R,P^{(1)}),
  \\ \arg\max_{R, P} W(R,P^{(\eta)}) = \arg\max_{R,P} W(R,P^{(1)}).
\end{align*}

Due to Lemma \ref{thm:lossy}, it is possible to recast the
optimization of dimension witnesses in the presence of loss to the
optimization in the ideal case. Then due to Lemma \ref{thm:pure} it is
not restrictive to carry out the optimization with pure states and no
shared randomness. Consider the case where $M = d+1$, $K = d$, and $N
= 3$. Using the technique discussed in Appendix \ref{sect:optW} one
can verify that the witness given by Eq. \eqref{eq:ldidw} with the
following coefficients
\begin{align}\label{eq:Id}
  c_{i,j,k} = \left\{ \begin{array}{ll} -1 & \textrm{ if } i+k \le M,
    \; j=1\\ +1 & \textrm{ if } i+k = M+1, \; j=1\\ 0 & \textrm{
      otherwise}
  \end{array}  \right.,
\end{align}
is the most robust to non-ideal detection efficiency. This fact should
not be surprising, as we notice that this witness relies on only $2$
out of $3$ outcomes. According to \cite{GBHA10}, we denote it
$I_{d+1}$. In \cite{GBHA10} (see also \cite{Mas02}) it was conjectured
that for any dimension $d$ the dimension witness $I_{d+1}$ is tight in
the absence of loss.

Now we provide upper and lower bounds for the maximal value $I_{d+1}^*
:= \max_{R,P} I_{d+1}$ where the maximization is over any set $R = \{
\rho_i \in \lin\hilb H \}$ of states and any set $P = \{ \Pi_k \}$ of
POVMs $\Pi_k = \{ \Pi_k^j \in \lin\hilb H\}$ with $\dim\hilb H = d$.
\begin{lmm}\label{thm:recbound}
  For any dimension $d$ we have $I_{d+1}^* \ge I_d^* + 1$.
\end{lmm}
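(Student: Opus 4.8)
The plan is to prove the inequality by an explicit recursive construction: starting from an optimal realization of $I_d$ in dimension $d-1$, I embed it in dimension $d$ and use the one extra dimension, together with one extra state and one extra measurement, to gain exactly one unit. First I would rewrite the witness grouped by measurement. Since the coefficients \eqref{eq:Id} are supported only on $j=1$, and recalling $M=d+1$, $K=d$, by Lemma~\ref{thm:pure} I may restrict to pure states $R=\{\psi_i\}$ and write $I_{d+1}(R,P)=\sum_{k=1}^{d}W_k$ with
\begin{align*}
  W_k = -\sum_{i=1}^{d+1-k}\bra{\psi_i}\Pi_k^1\ket{\psi_i} + \bra{\psi_{d+2-k}}\Pi_k^1\ket{\psi_{d+2-k}}.
\end{align*}
The analogous expansion for $I_d$ (with $M=d$, $K=d-1$, in dimension $d-1$) is $I_d = \sum_{k'=1}^{d-1}W'_{k'}$ with $W'_{k'} = -\sum_{i=1}^{d-k'}\bra{\phi_i}E_{k'}^1\ket{\phi_i} + \bra{\phi_{d+1-k'}}E_{k'}^1\ket{\phi_{d+1-k'}}$. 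The key observation is that the shift $k\mapsto k+1$ maps the index ranges of $I_d$ exactly onto those of $I_{d+1}$, so that $W_{k'+1}$ is formally identical to $W'_{k'}$.

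Building on this, let $\{\ket{\phi_i}\}_{i=1}^{d}$ and $\{E_{k'}\}_{k'=1}^{d-1}$ attain $I_d^*$ in a space $\hilb H_{d-1}$ of dimension $d-1$ (a maximizer exists by compactness, and is pure by Lemma~\ref{thm:pure}). I would work in $\hilb H_d = \hilb H_{d-1}\oplus\mathrm{span}\{\ket{e}\}$ with $\ket{e}$ a unit vector orthogonal to $\hilb H_{d-1}$, and set $\ket{\psi_i}=\ket{\phi_i}$ for $i=1,\dots,d$ and $\ket{\psi_{d+1}}=\ket{e}$; for the measurements I take $\Pi_k^1 = E_{k-1}^1$ (extended by zero on $\ket{e}$) for $k=2,\dots,d$ and $\Pi_1^1 = \ketbra{e}{e}$, completing each to a valid three-outcome POVM by $\Pi_k^2=\identity-\Pi_k^1\ge0$ and $\Pi_k^3=0$. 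For $k\ge 2$ every bracket in $W_k$ lives entirely in the embedded block, so $W_k = W'_{k-1}$ and hence $\sum_{k=2}^{d}W_k = I_d^*$. For $k=1$ orthogonality gives $\bra{\psi_i}\Pi_1^1\ket{\psi_i}=|\braket{e}{\phi_i}|^2=0$ for $i\le d$, while $\bra{\psi_{d+1}}\Pi_1^1\ket{\psi_{d+1}}=1$, so $W_1 = 1$. Adding these yields $I_{d+1}^* \ge \sum_{k=1}^{d}W_k = I_d^* + 1$.

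The only real work is the bookkeeping of the index shift — confirming that $W_{k'+1}$ and $W'_{k'}$ agree term by term — together with checking that the orthogonal enlargement leaves the recycled measurements $k=2,\dots,d$ acting purely on the old block, so that the new measurement $k=1$ contributes precisely $+1$ without perturbing any previous term. One should also verify that the POVM and normalization constraints survive the embedding, which they do since $E_{k-1}^1\le\identity$ guarantees $\identity-\Pi_k^1\ge0$, and since all nonzero coefficients sit at $j=1$ the normalization $c_{i,N,k}=0$ of Eq.~\eqref{eq:norm} holds automatically. I expect the main subtlety, rather than a genuine difficulty, to be keeping the two index conventions straight; the geometric content — that one extra orthogonal dimension buys exactly one extra unit of the witness — is transparent.
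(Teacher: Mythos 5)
Your proof is correct and takes essentially the same route as the paper's: the paper writes $I_{d+1}=I_d+C$ with $C$ the $k=1$ block and observes that $I_d$ and $C$ can be optimized independently, which is exactly the orthogonal-embedding construction you spell out (your index-shift bookkeeping and the choice $\Pi_1^1=\ketbra{e}{e}$, $\ket{\psi_{d+1}}=\ket{e}$ are the details the paper leaves implicit). No gaps.
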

\begin{proof}
  The statement follows from the recursive expression $I_{d+1} = I_d +
  C$, where
  \begin{align*}
    C := -\sum_{i=1}^d \bra{\psi_i} \Pi_1^1 \ket{\psi_i} +
    \bra{\psi_{d+1}} \Pi_1^1 \ket{\psi_{d+1}},
  \end{align*}
  and noticing that $I_d$ and $C$ can be optimized independently.
\end{proof}

A tight upper bound for $I_3$ was provided in \cite{GBHA10}. In the
following Lemma we provide a constructive proof suitable for
generalization to higher dimensions.
\begin{lmm}\label{thm:2bound}
  For dimension $d=2$ we have $I_3^* = \sqrt{2}$.
\end{lmm}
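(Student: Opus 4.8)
The plan is to use the preceding lemmas to reduce the problem to a finite-dimensional optimization, evaluate the resulting eigenvalue sums explicitly, and then optimize over the states. By Lemma~\ref{thm:pure} I may take $R=\{\psi_1,\psi_2,\psi_3\}$ to be pure qubit states and discard shared randomness. With $M=3$, $K=2$, $N=3$, the coefficients in Eq.~\eqref{eq:Id} are nonzero only for $j=1$, so only the first elements $\Pi_1^1$ and $\Pi_2^1$ of the two POVMs enter the witness, and expanding gives
\begin{align*}
  I_3 = \Tr[\Pi_1^1 A_1] + \Tr[\Pi_2^1 A_2],
\end{align*}
with $A_1 = \psi_3 - \psi_1 - \psi_2$ and $A_2 = \psi_2 - \psi_1$. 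The only restriction on each $\Pi_k^1$ is that it be completable to a three-outcome POVM, i.e.\ $0 \le \Pi_k^1 \le \identity$.

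Since $A_1$ and $A_2$ are contracted with distinct, mutually unconstrained operators, the two traces can be maximized separately. For any Hermitian $A$ one has $\max_{0\le E\le \identity}\Tr[EA] = \lambda_+(A) := (\Tr|A| + \Tr A)/2$, the sum of the positive eigenvalues of $A$, attained by letting $E$ be the projector onto the eigenspace of the positive eigenvalues (which satisfies $0\le E\le\identity$). Thus for fixed states $\max_{\Pi_1^1,\Pi_2^1} I_3 = \lambda_+(A_1) + \lambda_+(A_2)$, and it remains to maximize the right-hand side over pure states.

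The next step is to evaluate the two terms in the Bloch representation $\psi_i = (\identity + \vec n_i\cdot\vec\sigma)/2$ with $|\vec n_i|=1$. This gives $A_2 = (\vec n_2-\vec n_1)\cdot\vec\sigma/2$, so $\lambda_+(A_2) = |\vec n_2-\vec n_1|/2$, and $A_1 = [-\identity + (\vec n_3-\vec n_1-\vec n_2)\cdot\vec\sigma]/2$, with eigenvalues $(-1\pm|\vec n_3-\vec n_1-\vec n_2|)/2$, so $\lambda_+(A_1) = \max\bigl(0,\,|\vec n_3-\vec n_1-\vec n_2|-1\bigr)/2$. For fixed $\vec n_1,\vec n_2$ the optimal $\vec n_3$ is antiparallel to $\vec n_1+\vec n_2$, whence $|\vec n_3-\vec n_1-\vec n_2| = 1+|\vec n_1+\vec n_2|$ and $\lambda_+(A_1) = |\vec n_1+\vec n_2|/2$. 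Writing $\alpha$ for the angle between $\vec n_1$ and $\vec n_2$, the problem collapses to
\begin{align*}
  I_3^* = \max_{\alpha} \tfrac12\bigl(|\vec n_1+\vec n_2| + |\vec n_2-\vec n_1|\bigr) = \max_{\alpha}\bigl(\cos\tfrac\alpha2 + \sin\tfrac\alpha2\bigr) = \sqrt2,
\end{align*}
the maximum being at $\alpha=\pi/2$. The lower bound $I_3^*\ge\sqrt2$ is then confirmed by exhibiting the explicit states and projective measurements realizing this optimal configuration, while the chain of maximizations above supplies the matching upper bound.

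I expect the main obstacle to be the spectral analysis of $A_1 = \psi_3-\psi_1-\psi_2$, a non-trivial combination of three rank-one projectors; the Bloch parametrization is precisely what renders its spectrum transparent and lets the optimizations over $\psi_3$ and then over the single angle $\alpha$ decouple cleanly. This is also the structure that generalizes: for arbitrary $d$ the witness rewrites as $I_{d+1} = \sum_{k=1}^{d}\Tr[\Pi_k^1 A_k]$ with $A_k = \psi_{d+2-k} - \sum_{i=1}^{d+1-k}\psi_i$, so the same ``optimal POVM element $=$ projector onto the positive part'' reduction applies term by term, which is presumably why the authors present this proof as suited to higher dimensions.
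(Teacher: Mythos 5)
Your proof is correct and complete, but it follows a genuinely different route from the paper's. The paper's own proof is a terse sketch: it invokes a Lagrange-multiplier optimization together with the observation that if a pure state $\ket{u}$ satisfies $\ket{u}=\braket{v_0}{u}\ket{v_0}+\braket{v_1}{u}\ket{v_1}$ for normalized $\ket{v_0},\ket{v_1}$ then $|\braket{v_0}{u}|=|\braket{v_1}{u}|$, which pins down the stationary configuration of states; essentially no computation is displayed. You instead eliminate the POVMs first via $\max_{0\le E\le\identity}\Tr[EA]=\lambda_+(A)$ (the same reduction the paper uses later in Lemma~\ref{thm:purepovms} of Appendix~\ref{sect:optId}), and then make the remaining state optimization fully explicit in the Bloch picture, collapsing it to maximizing $\cos\frac\alpha2+\sin\frac\alpha2$ over a single angle. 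Your version is more self-contained and verifiable --- it delivers both the upper bound and the achieving configuration without appeal to an unstated stationarity analysis --- at the cost of being specific to the qubit Bloch parametrization; your closing remark correctly identifies that the part which survives in higher dimensions is the term-by-term decomposition $I_{d+1}=\sum_k\Tr[\Pi_k^1A_k]$ with the positive-part reduction, while the single-angle collapse is a $d=2$ luxury. One cosmetic point: the degenerate case $\vec n_1+\vec n_2=0$ should be noted (there the "antiparallel'' prescription for $\vec n_3$ is undefined, but $\lambda_+(A_1)=0$ so the formula $\lambda_+(A_1)=|\vec n_1+\vec n_2|/2$ still holds and the case is suboptimal anyway).
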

\begin{proof}
  The statement follows from standard optimization with Lagrange
  multipliers method and from the straightforward observation that
  given two normalized pure states $\ket{v_0}$ and $\ket{v_1}$, if a
  pure state $\ket{u}$ can be decomposed as follows
  \begin{align*}
    \ket{u} = \braket{v_0}{u}\ket{v_0} + \braket{v_1}{u}\ket{v_1},
  \end{align*}
  then $|\braket{v_0}{u}| = |\braket{v_1}{u}|$.
\end{proof}

Making use of Lemmas \ref{thm:recbound} and \ref{thm:2bound}, we
provide upper and lower bounds on $I_{d+1}^*$ as follows
\begin{align}\label{eq:bound}
  d - 2 + \sqrt{2} \le I_{d+1}^* \le d,
\end{align}
where the second inequality follows from the non discriminability of
$d+1$ states in dimension $d$ (see \cite{GBHA10}).

We now make use of these facts to provide our main result, namely a
lower threshold for the detection efficiency required to reliably
dimension witnessing. We consider the problem of lower bounding the
dimension of a system prepared by a non-characterized source in
Proposition \ref{thm:etac}, as well as the problem of discriminating
between the quantum or classical nature of a source in Proposition
\ref{thm:etad}.

\begin{thm}\label{thm:etac}
  For any $d$ there exists a dimension witnessing setup such that it
  is possible to discriminate between the quantum and classical nature
  of a $d$-dimensional system using POVMs with detection efficiency
  $\eta$ whenever
  \begin{align}\label{eq:etac}
    \eta \ge \eta_{qc} := (d-1) / I_{d+1}.
  \end{align}
  Furthermore one has
  \begin{align}\label{eq:etacbound}
    \frac{d-1}d \le \eta_{\textrm{qc}} \le \frac{d-1}{d - 2 +
      \sqrt{2}}.
  \end{align}
\end{thm}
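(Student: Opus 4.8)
The plan is to realize the discriminating setup with the witness $I_{d+1}$ of Eq.~\eqref{eq:Id}, taking $M=d+1$, $K=d$, $N=3$ and system dimension $d$, and then to compare the value of $I_{d+1}$ on lossy quantum correlations against its maximal value over classical ones. The key input is the \emph{classical bound}: over $\conv\set C$ (classical correlations of dimension $d$) one has $I_{d+1}(p)\le d-1$, so that $I_{d+1}(p)>d-1$ forces $p\notin\conv\set C$ and thereby certifies the quantum nature of the source. Since the coefficients \eqref{eq:Id} are supported on the outcome $j=1$ only, $I_{d+1}$ already satisfies the normalization $c_{i,N,k}=0$ of Eq.~\eqref{eq:norm}, and the no-click outcome $j=N=3$ created by loss enters the witness with zero weight; hence appending it leaves the classical threshold $L=d-1$ untouched.

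First I would fix an ideal quantum strategy $(R,P)$ attaining $I_{d+1}(R,P^{(1)})=I_{d+1}^*$, where $I_{d+1}^*:=\max_{R,P}I_{d+1}$. By the estimate \eqref{eq:bound} one has $I_{d+1}^*\ge d-2+\sqrt{2}>d-1$, so already in the lossless case this strategy lies strictly above the classical threshold and discrimination is possible. Next I would switch on the loss: by Lemma~\ref{thm:lossy} the lossy value collapses to $I_{d+1}(R,P^{(\eta)})=\eta\,I_{d+1}^*$. The lossy quantum correlations therefore remain outside $\conv\set C$ exactly as long as $\eta\,I_{d+1}^*\ge d-1$, which rearranges to $\eta\ge (d-1)/I_{d+1}^*=:\eta_{qc}$; this is precisely Eq.~\eqref{eq:etac}. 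The two-sided estimate \eqref{eq:etacbound} is then immediate: substituting $d-2+\sqrt{2}\le I_{d+1}^*\le d$ into $\eta_{qc}=(d-1)/I_{d+1}^*$ yields $\tfrac{d-1}{d}\le\eta_{qc}\le\tfrac{d-1}{d-2+\sqrt{2}}$.

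I expect the main obstacle to be the justification of the classical bound $L=d-1$ together with its stability under the no-click outcome. Concretely, one must show that the maximum of $I_{d+1}$ over $\conv\set C$ equals $d-1$: by Lemma~\ref{thm:pure} shared randomness is irrelevant, and by Lemma~\ref{thm:classicalpovms} one may restrict to classical (basis) states and classical POVMs, whose extreme points are projectors onto subsets of the basis, reducing the problem to a finite combinatorial optimization over deterministic assignments; one then checks that no assignment exceeds $d-1$ and that placing weight on the no-click outcome, which does not appear in \eqref{eq:Id}, cannot raise the value. This combinatorial bound (established in \cite{GBHA10}) is the substantive step, whereas everything else follows mechanically from Lemma~\ref{thm:lossy} and the estimates \eqref{eq:bound}.
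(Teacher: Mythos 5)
Your proposal is correct and follows essentially the same route as the paper's own proof: it uses the witness $I_{d+1}$ with $M=d+1$, $K=d$, $N=3$, the classical bound $d-1$ from \cite{GBHA10}, Lemma \ref{thm:lossy} to obtain the linear scaling $\eta\,I_{d+1}^*$, and the estimate \eqref{eq:bound} to derive \eqref{eq:etacbound}. Your additional remarks (that $d-2+\sqrt{2}>d-1$ guarantees discrimination is possible at all, and that the zero-weight no-click outcome leaves the classical threshold unchanged) merely make explicit details the paper leaves implicit.
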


\begin{proof}
  We provide a constructive proof of the statement. Take $M = d+1$, $K
  = d$, and $N = 3$, and we show that $I_{d+1}$ satisfies the thesis.

  We notice that the maximum value of $I_{d+1}$ attainable with
  classical states is given by $d-1$ \cite{GBHA10}. Then
  $\eta_{\textrm{qc}}$ is the minimum value of the detection
  efficiency such that $I_{d+1}$ can discriminate a quantum system
  from a classical one.

  Due to Lemma \ref{thm:lossy} we have Eq. \eqref{eq:etac}. From
  Eq. \eqref{eq:bound} the lower and upper bounds for
  $\eta_{\textrm{qc}}$ given in Eq. \eqref{eq:etacbound}
  straightforwardly follow.
\end{proof}

Notice that $I_{d+1}$ in Eq. \eqref{eq:etac} can be numerically
evaluated with the techniques discussed in Appendix
\ref{sect:optId}. Figure \ref{fig:etac} plots the value of
$\eta_{\textrm{qc}}$ for different values of the dimension $d$ of the
Hilbert space $\hilb{H}$. The threshold in the detection efficiency
when $d=2$ is $\eta_{\textrm{qc}}=1/\sqrt{2}$, going asymptotically to
$1$ with $d$ as $\sim1+1/d$.
\begin{figure}[htb]
  \includegraphics[width=\columnwidth]{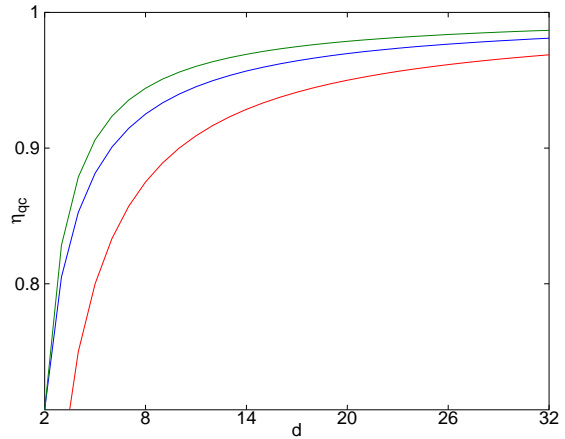}
  \caption{(Color on line) Threshold value (middle line) of the
    detection efficiency $\eta_{\textrm{qc}}$ as in
    Eq. \eqref{eq:etac} as a function of the dimension $d$, obtained
    through numerical optimization of $I_{d+1}$ with Algorithm
    \ref{thm:algo2}. The lower bound (lower line) and upper bound
    (upper line) given by Eq. \eqref{eq:etacbound} are also
    plotted. As expected, the upper bound is tight for $d=2$. The
    detection efficiency $\eta_{\textrm{qc}}$ asymptotically goes to
    $1$ as $d\to\infty$ since its upper and lower bound do the same.}
  \label{fig:etac}
\end{figure}

\begin{thm}\label{thm:etad}
  For any $d$ there exists a dimension witnessing setup such that it
  is possible to lower bound the dimension of a $d+1$-dimensional
  system using POVMs with detection efficiency $\eta$ whenever
  \begin{align}\label{eq:etad}
    \eta \ge \eta_{\textrm{dim}} := I_{d+1} / d.
  \end{align}
  Furthermore one has
  \begin{align}\label{eq:etadbound}
    \eta_{\textrm{dim}} \ge 1 - \frac{2 - \sqrt{2}}d.
  \end{align}
\end{thm}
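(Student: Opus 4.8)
The plan is to reuse the witness $I_{d+1}$ and the setup $M=d+1$, $K=d$, $N=3$ from Proposition~\ref{thm:etac}, but now comparing dimension $d$ against dimension $d+1$ rather than quantum against classical. The threshold to beat is $I_{d+1}^*$, the maximum of $I_{d+1}$ over dimension-$d$ systems: by the very definition of a dimension witness, any correlation with $I_{d+1}(p) > I_{d+1}^*$ cannot be reproduced in dimension $d$ and hence certifies dimension at least $d+1$. What I must supply is the best value attainable by a genuine $(d+1)$-dimensional system, together with its behaviour under loss.

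First I would establish that $I_{d+1}\le d$ in \emph{every} dimension. Reading off the coefficients in Eq.~\eqref{eq:Id}, only $j=1$ contributes, the positive terms are exactly the $d$ entries with $i+k=d+2$, and every negative term is a probability; hence $I_{d+1}\le\sum_{i+k=d+2}p_{1|i,k}\le d$. Next I would exhibit a dimension-$(d+1)$ construction saturating this bound: take the orthonormal states $\rho_i=\ketbra{i}{i}$ for $i\in[1,d+1]$ and, for each $k$, the rank-one projector $\Pi_k^1=\ketbra{d+2-k}{d+2-k}$. Then $p_{1|i,k}=\delta_{i,d+2-k}$ sets every positive term to $1$ and every negative term to $0$, so $I_{d+1}=d$. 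This shows the maximum over dimension $d+1$ equals $d$, whereas the maximum over dimension $d$ is $I_{d+1}^*$, which is no larger than $d$ by Eq.~\eqref{eq:bound} (indeed strictly smaller, since $d+1$ states cannot be discriminated in dimension $d$).

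With these two optima in hand the threshold follows mechanically. Applying Lemma~\ref{thm:lossy} to the saturating construction, a $(d+1)$-dimensional system read out by POVMs of efficiency $\eta$ produces $\eta d$. Witnessing succeeds precisely when this crosses the dimension-$d$ ceiling, i.e. $\eta d > I_{d+1}^*$, which rearranges to Eq.~\eqref{eq:etad}. Finally I would substitute the lower bound $I_{d+1}^*\ge d-2+\sqrt2$ from Eq.~\eqref{eq:bound} to obtain $\eta_{\textrm{dim}}=I_{d+1}^*/d\ge(d-2+\sqrt2)/d=1-(2-\sqrt2)/d$, which is exactly Eq.~\eqref{eq:etadbound}.

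The main obstacle is conceptual rather than computational: one must keep straight that the threshold is the dimension-$d$ optimum $I_{d+1}^*$ while the attainable signal is the dimension-$(d+1)$ optimum $d$, and verify that the latter genuinely requires dimension $d+1$, so that the construction is an honest witness and not reproducible in dimension $d$. The only real work is the two-sided argument that $\max I_{d+1}=d$ in dimension $d+1$; everything else is a direct invocation of Lemma~\ref{thm:lossy} and the bound in Eq.~\eqref{eq:bound}.
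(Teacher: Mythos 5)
Your proposal is correct and follows essentially the same route as the paper: identify the dimension-$d$ ceiling $I_{d+1}^*$ and the dimension-$(d+1)$ optimum $d$, rescale the latter by $\eta$ via Lemma~\ref{thm:lossy}, and read off the threshold and its bound from Eq.~\eqref{eq:bound}. The only difference is that you prove $\max I_{d+1}=d$ in dimension $d+1$ explicitly (coefficient counting plus the orthonormal construction), where the paper simply cites \cite{GBHA10}; this is a welcome filling-in of detail, not a different argument.
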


\begin{proof}
  We provide a constructive proof of the statement. Take $M = d+1$, $K
  = d$, and $N = 3$, and we show that $I_{d+1}$ satisfies the thesis.

  We notice that the maximum value of $I_{d+1}$ attainable in any
  dimension $> d$ is given by $d$ \cite{GBHA10}. Then
  $\eta_{\textrm{dim}}$ is the minimum value of the detection
  efficiency such that $I_{d+1}$ can lower bound the dimension of a
  $d+1$ dimensional system.

  Due to Lemma \ref{thm:lossy} we have Eq. \eqref{eq:etad}. From
  Eq. \eqref{eq:bound} the lower bound to $\eta_{\textrm{dim}}$ given
  by Eq. \eqref{eq:etadbound} straightforwardly follows.
\end{proof}

Notice that $I_{d+1}$ in Eq. \eqref{eq:etad} can be numerically
evaluated with the techniques discussed in Appendix
\ref{sect:optId}. Figure \ref{fig:etad} plots the value of
$\eta_{\textrm{dim}}$ for different values of the dimension $d$ of the
Hilbert space $\hilb{H}$. The threshold in the detection efficiency
when $d=2$ is $\eta_{\textrm{qc}}=1/\sqrt{2}$, going asymptotically to
$1$ with $d$ as $\sim1+1/d$. We notice that $\eta_{\textrm{dim}}$
grows faster than $\eta_{qc}$, thus showing that for fixed dimension,
the discrimination between the quantum or classical nature of the
source is more robust to loss than lower bounding the dimension of the
prepared states.
\begin{figure}[htb]
  \includegraphics[width=\columnwidth]{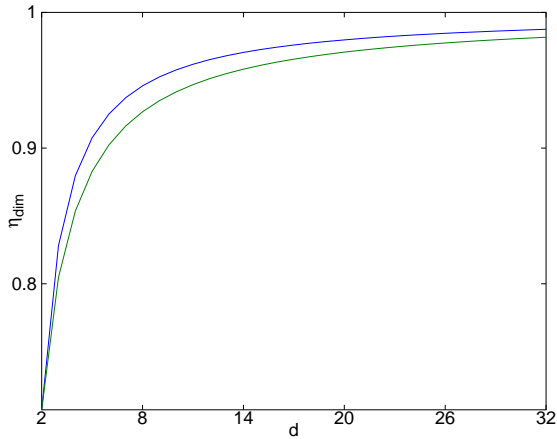}
  \caption{(Color on line) Threshold value (upper line) of the
    detection efficiency $\eta_{\textrm{dim}}$ as in
    Eq. \eqref{eq:etad} as a function of the dimension $d$, obtained
    through numerical optimization of $I_{d+1}$ with Algorithm
    \ref{thm:algo2}. The lower bound (lower line) given by
    Eq. \eqref{eq:etadbound} is also plotted. As expected, the lower
    bound is tight for $d=2$. The detection efficiency
    $\eta_{\textrm{dim}}$ asymptotically goes to $1$ as $d\to\infty$
    since its lower bound does the same (and $\eta_{\textrm{dim}} \le
    1$ is a trivial upper bound).}
  \label{fig:etad}
\end{figure}

\section{Conclusion}\label{sect:conclusion}

In this work we addressed the problem whether a lossy setup can
provide a reliable lower bound on the dimension of a classical or
quantum system. First we provided some relevant properties of the sets
of classical and quantum correlations attainable in a dimension
witnessing setup. Then we introduced analytical and numerical tools to
address the problem of the robustness of DIDWs, and we provided the
amount of loss that can be tolerated in dimension witnessing. The
presented results are of relevance for experimental implementations of
DIDWs, and can be naturally applied to semi-device independent QKD and
QRACs.

We notice that, while we provided analytical proofs of our main
results, i.e. Propositions \ref{thm:etac} and \ref{thm:etad}, their
optimality as a bound relies on numerical evidences. In particular,
they are optimal if the dimension witness $I_{d+1}$ is indeed the most
robust to loss for any $d$, which is suggested by numerical evidence
obtained with the techniques of Appendix \ref{sect:optW} and Appendix
\ref{sect:optId}. Thus, a legitimate question is whether the bounds
provided in Propositions \ref{thm:etac} and \ref{thm:etad} are indeed
optimal. Moreover, it is possible to consider models of loss more
general than the one considered here, e.g. one in which a different
detection efficiency is associated to any POVM.

A natural generalization of the problem of DIDWs, in the ideal as well
as in the lossy scenario, is that in the absence of correlations
between the preparations and the measurements. In this case, as
discussed in this work, the relevant sets of correlations are $\set Q$
and $\set C$, which are non-convex as shown in Section
\ref{sect:didw}. The non convexity of the relevant sets allows the
exploitation of non-linear witnesses - as opposed to what we did in
the present work. An intriguing but still open question is whether
there are situations in which this exploitation allows to dimension
witness for any non-null value of the detection efficiency.

Another natural generalization of the problem of DIDWs is that of
entangled assisted DIDWs, namely when entanglement is allowed to be
shared between the preparing device on Alice's side and the measuring
device on Bob's side. This problem is similar to that of super-dense
coding~\cite{BW92}. Consider again Fig. \ref{fig:setup}. In the
simplest super-dense coding scenario, Alice presses one button out of
$M=4$, while Bob always performs the same POVM ($K=1$) obtaining one
out of $N=4$ outcomes. The dimension of the Hilbert space $\hilb{H}$
is $\dim(\hilb{H}) = 2$, but a pair of maximally entangled qubits is
shared between the parties. In this case, the results of~\cite{BW92}
imply that a classical system of dimension $4$ (quart) can be sent
from Alice to Bob by sending a qubit (corresponding to half of the
entangled pair).

Consider the general scenario where now the two parties are allowed to
share entangled particles. The super-dense coding protocol
automatically ensures that by sending a qubit Alice and Bob can always
achieve the same value of any DIDW as attained by a classical
quart. Remarkably, the super-dense coding protocol turns out not to be
optimal, as we identified more complex protocols beating it. In
particular, we found a $(M=4,K=2,N=4)$ situation for which, upon
performing unitary operations on her part of the entangled pair and
subsequently sending it to Bob, Alice can achieve correlations that
can not be reproduced upon sending a quart. This thus proves the
existence of communication contexts in which sending half of a
maximally entangled pair is a more powerful resource than a classical
quart. This observation is analogous to that done in \cite{PZ10},
where it was shown that entangled assisted QRACs (where an entangled
pair of qubits is shared between the parties) outperform the best of
known QRACs.  For these reasons we believe that the problem of
entangled assisted DIDWs deserves further investigation.

\section*{Acknowledgments}

We are grateful to Nicolas Brunner, Stefano Facchini, and Marcin
Paw{\l}owski for very useful discussions and suggestions. M. D. thanks
Anne Gstottner and the Human Resources staff at ICFO for their
invaluable support. This work was funded by the Spanish FIS2010-14830
Project and Generalitat de Catalunya, the European PERCENT ERC
Starting Grant and Q-Essence Project, and the Japanese Society for the
Promotion of Science (JSPS).

\appendix

\section{Numerical optimization of dimension witnesses}
\label{sect:optW}

Given a linear dimension witness $W$ the following algorithm converges
to a local maximum of $W(R,P)$.
\begin{algo}\label{thm:algo1}
  For any set $R^{(0)} = \{\psi_i^{(0)}\}$ of pure states and any set
  $P^{(0)} = \{\Pi_k^{(0)}\}$ of POVMs $\Pi_k^{(0)} =
  \{\Pi_k^{j,(0)}\}$,
  \begin{enumerate}
    \item let $\ket{\bar{\psi}_i^{(n+1)}} = \left[ (1-\epsilon)
      \identity + \epsilon \sum_{j,k} c_{i,j,k} \Pi_k^{j,(n)} \right]
      \ket{\psi_i^{(n)}}$,
    \item let $\bar{\Pi}_k^{j,(n+1)} = \left\{\left[ (1-\epsilon)
      \identity + \epsilon \sum_i c_{i,j,k} \psi_i^{(n)} \right]
      \sqrt{\Pi_k^{j,(n)}}\right\}^2$,
    \item normalize $\ket{\psi_i^{(n+1)}} =
      ||\bar{\psi}_i^{(n+1)}||^{-1/2} \ket{\bar{\psi}_i^{(n+1)}}$,
    \item normalize $\Pi_k^{j,(n+1)} = S_k^{-\frac12}
      \bar{\Pi}_k^{j,(n+1)} S_k^{-\frac12}$ with $S_k = \sum_j
      \bar{\Pi}_k^{j,(n+1)}$.
  \end{enumerate}
\end{algo}

As for all steepest-ascent algorithm, there is no protection against
the possibility of convergence toward a local, rather than a global,
maximum. Hence one should run the algorithm for different initial
ensembles in order to get some confidence that the observed maximum is
the global maximum (although this can never be guaranteed with
certainty). Any initial set of states and any initial set of POVMs can
be used as a starting point, except for a subset corresponding to
minima of $W(R,P)$. These minima are unstable fix-points of the
iteration, so even small perturbations let the iteration converge to
some maxima. The parameter $\epsilon$ controls the length of each
iterative step, so for $\epsilon$ too large, an overshooting can
occur. This can be kept under control by evaluating $W(R,P)$ at the
end of each step: if it decreases instead of increasing, we are warned
that we have taken $\epsilon$ too large.

Referring to Fig. \ref{fig:setup}, the simplest non-trivial scenario
one can consider is the one with $M=3$ preparations and $K=2$ POVMs
each with $N=3$ outcomes, one of which corresponding to no-click
event. In this case one has several tight classical DIDWs. Applying
Algorithm \ref{thm:algo1} we verified that among them the most robust
to loss is given by Eq. \eqref{eq:ldidw} with coefficients given by
Eq. \eqref{eq:Id}.

\section{Numerical optimization of $I_{d+1}$}
\label{sect:optId}

The following Lemma proves that the POVMs maximizing $I_{d+1}$ for any
dimension $d$ are such that one of their elements is a projector on a
pure state, thus generalizing a result from \cite{Mas05}.
\begin{lmm}\label{thm:purepovms}
  For any dimension $d$, the maximum of $I_{d+1}$ is achieved by a set
  $P = \{\Pi_k\}$ of POVMs $\Pi_k = \{\Pi_k^j\}$ with $\Pi_k^1$ a
  projector with $\rank \Pi_k^1 = 1$ for any $k$.
\end{lmm}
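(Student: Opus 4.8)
The plan is to exploit the very special structure of the coefficients in Eq.~\eqref{eq:Id}: since $c_{i,j,k}=0$ for every $j\neq 1$, the witness involves only the first element $\Pi_k^1$ of each POVM. By Lemma~\ref{thm:pure} I may take the states pure, so that $\rho_i=\psi_i$, and write
\begin{align*}
  I_{d+1}=\sum_{k=1}^{d}\Tr\left[A_k\,\Pi_k^1\right],\qquad
  A_k:=\psi_{d+2-k}-\sum_{i=1}^{d+1-k}\psi_i,
\end{align*}
where each $A_k$ is Hermitian and, crucially, each $\Pi_k^1$ appears in one term only. Fixing the states, the maximization over POVMs therefore decouples across $k$, and the requirement that $\{\Pi_k^1,\Pi_k^2,\Pi_k^3\}$ be completable to a valid POVM is exactly $0\le\Pi_k^1\le\identity$. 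Thus for each $k$ I must solve the linear program $\max_{0\le X\le\identity}\Tr[A_k X]$.

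The first step is the standard fact that this maximum is attained at the projector onto the strictly positive eigenspace of $A_k$, the optimal value being the sum of the positive eigenvalues of $A_k$. Hence the optimal $\Pi_k^1$ is a projector whose rank equals the number of positive eigenvalues of $A_k$, and the whole statement reduces to showing that \emph{$A_k$ has at most one positive eigenvalue}.

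This is the heart of the argument, and it follows from the fact that $A_k$ is a rank-one positive operator minus a positive semidefinite one. Writing $A_k=\psi_{d+2-k}-Q$ with $Q:=\sum_{i\le d+1-k}\psi_i\ge 0$, suppose for contradiction that $A_k$ were positive on a two-dimensional subspace $V$. The linear functional $\ket{w}\mapsto\braket{\psi_{d+2-k}}{w}$ restricted to $V$ has image in $\mathbb{C}$, hence a kernel of dimension at least one, so there is a unit vector $\ket{w}\in V$ orthogonal to $\ket{\psi_{d+2-k}}$; for this vector $\bra{w}A_k\ket{w}=-\bra{w}Q\ket{w}\le 0$, contradicting positivity on $V$. (Equivalently, Weyl's interlacing inequality for the rank-one update $A_k=-Q+\psi_{d+2-k}$ gives $\lambda_2(A_k)\le\lambda_1(-Q)\le 0$.) Therefore $A_k$ has at most one positive eigenvalue, and the optimal $\Pi_k^1$ has rank at most one.

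The remaining point, and the step I expect to require the most care, is to upgrade ``rank at most one'' to ``rank exactly one,'' i.e.\ to exclude at the global maximum the degenerate case in which some $A_k$ has no positive eigenvalue and the optimum $\Pi_k^1=0$ is selected. I would argue that at a maximizer the top eigenvalue $\lambda_{\max}(A_k)$ is nonnegative for every $k$: otherwise $\psi_{d+2-k}$ is strictly dominated by $\sum_{i\le d+1-k}\psi_i$, and a suitable rotation of the states strictly increases $I_{d+1}$, contradicting optimality. When $\lambda_{\max}(A_k)>0$ the rank-one projector onto the top eigenvector is the unique optimizer, while when $\lambda_{\max}(A_k)=0$ that same rank-one projector still attains the per-$k$ maximum (namely $0$). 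Combining the decoupling, the semidefinite-program characterization, and the geometric lemma then yields a maximizer with $\rank\Pi_k^1=1$ for every $k$, as claimed.
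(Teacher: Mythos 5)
Your proposal is correct and follows essentially the same route as the paper: decouple the optimization over $k$, recognize each subproblem as $\max_{0\le X\le\identity}\Tr[A_k X]$ solved by the projector onto the positive eigenspace, and bound the number of positive eigenvalues of a rank-one positive update of a negative semidefinite operator via Weyl's inequality. Your version is in fact somewhat more careful than the paper's (correct index sets in the definition of $A_k$, an elementary alternative to Weyl, and explicit handling of the degenerate case $\lambda_{\max}(A_k)\le 0$, which the paper glosses over).
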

\begin{proof}
  For any fixed set $R = \{ \psi_i \}$ of pure states define $A_k := -
  \sum_{i\neq k} \psi_i$, $B := \psi_k$, and $X_k := A_k + B_k$. Then
  clearly $A_k \le 0$, $B_k \ge 0$ and $\rank B_k = 1$ for any
  $k$. From Eq. \eqref{eq:ldidw} it follows immediately that the
  optimal set $P^* = \{ \Pi_k^* \}$ of POVMs $\Pi_k^* = \{ \Pi^{*j}_k
  \}$ is such that $\Pi_k^{*1} = \arg\min_{\Pi_k^1} \Tr[X
    \Pi_k^1]$. The optimum of $I_{d+1}$ is achieved when $\Pi_k^1$ is
  the sum of the eigenvectors of $X_k$ corresponding to positive
  eigenvalues.

  Upon denoting with $\lambda_1(A_k) \ge \dots \ge \lambda_n(A_k)$ the
  eigenvalues of $A_k$, the Weyl inequality (see for example
  \cite{Bha06}) $\lambda_1(X_k) \le \lambda_1(A_k) + \lambda_n(B_k)$
  holds for any $n$. Since $\lambda_1(A_k) \le 0$ and $\lambda_n(B_k)
  = 0$ for any $k$ and for any $n \neq 0$, the thesis follows
  immediately.
\end{proof}

Algorithm \ref{thm:algo1} can be simplified using Lemma
\ref{thm:purepovms}. The following algorithm converges to a local
maximum of $I_{d+1}$.
\begin{algo}\label{thm:algo2}
  For any set $R^{(0)} = \{\psi_i^{(0)}\}$ of pure states and any set
  $P^{(0)} = \{\Pi_k^{(0)}\}$ of POVMs $\Pi_k^{(0)} =
  \{\Pi_k^{j,(0)}\}$,
  \begin{enumerate}
    \item let $\ket{\bar{\psi}_i^{(n+1)}} = \ket{\psi_i^{(n)}} +
      \epsilon \sum_{j,k} c_{i,j,k} \braket{\pi_k^{(n)}}{\psi_i^{(n)}}
      \ket{\pi_k^{(n)}}$,
    \item let $\ket{\bar{\pi}_k^{(n+1)}} = \ket{\pi_k^{(n)}} +
      \epsilon \sum_{i,k} c_{i,j,k} \braket{\psi_i^{(n)}}{\pi_k^{(n)}}
      \ket{\psi_i^{(n)}}$,
    \item normalize $\ket{\psi_i^{(n+1)}} =
      ||\bar{\psi}_i^{(n+1)}||^{-1/2} \ket{\bar{\psi}_i^{(n+1)}}$,
    \item normalize $\ket{\pi_k^{(n+1)}} =
      ||\bar{\pi}_k^{(n+1)}||^{-1/2} \ket{\bar{\pi}_k^{(n+1)}}$.
  \end{enumerate}
\end{algo}

The same remarks made about Algorithm \ref{thm:algo1} hold true for
Algorithm \ref{thm:algo2}. Nevertheless, we verified that in practical
applications Algorithm \ref{thm:algo2} always seems to converge to a
global, not a local maximum. This can be explained considering that
without loss of generality it optimizes over a smaller set of POVMs
when compared to Algorithm \ref{thm:algo1}. Moreover, we noticed that
the optimal sets of states and POVMs are real, namely there exists a
basis with respect to which states and POVM elements have all real
matrix entries. A similar observation was done in \cite{FFW11} in the
context of Bell's inequalities.

\end{document}